\newcommand\fs@spaceruled{\def\@fs@cfont{\bfseries}\let\@fs@capt\floatc@ruled
  \def\@fs@pre{\vspace{0.8em}\hrule height.8pt depth0pt \kern2pt}%
  \def\@fs@post{\kern2pt\hrule\relax}%
  \def\@fs@mid{\kern2pt\hrule\kern2pt}%
  \let\@fs@iftopcapt\iftrue}
\theoremstyle{plain}
\newcommand{\trace}{\mathop{\mathrm{tr}}}
\newcommand{\rank}{\mathop{\mathrm{rank}}}
\newcommand{\diag}{\mathop{\mathrm{diag}}}
\newtheorem{lemma}{Lemma}
\theoremstyle{definition}
\theoremstyle{remark}
\newtheorem{remark}{Remark}
\begin{document}
\title{Beamforming Design for Max-Min Fair SWIPT in Green Cloud-RAN with Wireless Fronthaul}
\author{{Zhao Chen$^{1}$, Haisheng Xu$^{1}$, Lin X. Cai$^{2}$, and Yu Cheng$^{2}$}\\
$^{1}$Department of Electrical Engineering, Columbia University, New York, USA\\
$^{2}$Department of Electrical and Computer Engineering, Illinois Institute of Technology, Chicago, USA\\
Emails: \{zc2412, hx2219\}@columbia.edu, \{lincai, cheng\}@iit.edu}

\maketitle

\begin{abstract}
In this paper, a joint beamforming design for max-min fair simultaneous wireless information and power transfer (SWIPT) is investigated in a green cloud radio access network (Cloud-RAN) with millimeter wave (mmWave) wireless fronthaul.  
To achieve a balanced user experience for separately located data receivers (DRs) and energy receivers (ERs) in the network, joint transmit beamforming vectors are optimized to maximize the minimum data rate among all the DRs, while satisfying each ER with sufficient RF energy at the same time. 
Then, a two-step iterative algorithm is proposed to solve the original non-convex optimization problem with the fronthaul capacity constraint in an $l_0$-norm form. 
Specifically, the $l_0$-norm constraint can be approximated by the reweighted $l_1$-norm, from which the optimal max-min data rate and the corresponding joint beamforming vector can be derived via semidefinite relaxation (SDR) and bi-section search. 
Finally, extensive numerical simulations are performed to verify the superiority of the proposed joint beamforming design to other separate beamforming strategies. 

\end{abstract}

\begin{IEEEkeywords}
Beamforming design, max-min fairness, simultaneous wireless information and power transfer (SWIPT), Cloud-RAN, wireless fronthaul.
\end{IEEEkeywords}

\section{Introduction}


With the rapidly increasing demand of data traffic in future wireless communication networks, cloud radio access network (Cloud-RAN)~\cite{chih2014toward,chen2017energy} becomes an emerging network
architecture to achieve high-speed and ubiquitous connectivity with guaranteed quality of service (QoS) in a cost-effective way. 
In a Cloud-RAN, instead of conventional base stations (BSs), low-power and low-complexity remote radio heads (RRHs) are densely deployed and connected via fronthaul links to a pool of baseband processing units (BBUs) at the central processor (CP). Traditionally, these links are implemented by optical fibers or high-speed Ethernet, with each RRH having a dedicated link
to the CP. 
However, the large numbers of RRHs and the difficulty to reach some RRHs with wired connections make the dedicated links not always possible.
To this end, millimeter wave (mmWave) wireless fronthaul~\cite{dehos2014millimeter,stephen2017joint,hu2017joint} emerges as a cost-effective technique to enable flexible implementation of fronthaul links, which can operate on the largely unused mmWave bandwidth with highly directional antennas.
Meanwhile, in order to reduce the emission of $CO_2$ and build a more environmentally friendly communication system, energy harvested from renewable sources~\cite{zhou2015greendelivery,qin2017fronthaul,chen2017sustainable} such as solar and wind powers, can be exploited by RRHs as an alternative for traditional on-grid power supplies.


Recently, simultaneous wireless information and power transfer (SWIPT) has attracted great attention in the literature, which enables users to either decode data messages or harvest radio-frequency (RF) energy from the broadcast wireless signals.
Moreover, it is beneficial to integrate multiple antenna technologies~\cite{shi2014group,dai2014sparse,luo2015downlink,xiang2013coordinated}, especially multi-user MIMO into a Cloud-RAN for efficient information and energy transmissions~\cite{chen2018optimal,Boshkovska2017max,ariffin2017sparse,ng2015secure}. 
Thus, both data receivers (DRs) and energy receivers (ERs) in the network can be satisfied simultaneously by joint transmit beamforming.
In~\cite{chen2018optimal} and~\cite{Boshkovska2017max}, joint beamforming design for SWIPT is investigated without considering the capacity limitation of fronthaul links.
Specifically, throughput-energy trade-off regions for a sustainable Cloud-RAN are derived in~\cite{chen2018optimal}, and max-min fair beamforming design for energy transfer is studied in~\cite{Boshkovska2017max} under imperfect channel state information (CSI).
On the other hand, in both~\cite{ariffin2017sparse} and \cite{ng2015secure}, with given limited fronthaul capacities, total network transmit power is minimized for joint beamforming of SWIPT, where each DR and each ER are satisfied with a constant signal-to-interference-plus-noise (SINR) target and received RF energy target, respectively.
However, they only consider total energy minimization for a constant SINR target. How to improve the data service rate with providing fairness for all DRs, i.e., to achieve the maximum overall minimum data rate among all the DRs with optimal joint beamforming under per-RRH energy budget is still unknown in a green Cloud-RAN with limited fronthaul capacity.




In this paper, we consider a Cloud-RAN system, where the RRHs are all supplied with green energy and connected to the CP using mmWave wireless fronthaul links with limited capacity. 
In order to achieve a balanced user experience, joint transmit beamforming vectors will be optimized to maximize the minimum data rate among all the DRs, while each ER will be satisfied with sufficient RF energy at the same time.
Hence, an optimization problem is formulated to design the optimal joint beamforming vector, which is originally a non-convex problem with the fronthaul constraint in an $l_0$-norm form.
In order to handle the $l_0$-norm constraints, we approximate it by iteratively using the reweighted $l_1$-norm.
Although it is still non-convex due to the the nonlinear objective, it can be converted into an equivalent inverse problem, with which semidefinite relaxation (SDR) and bisection search can be applied to obtain the optimal minimum transmission rate and the corresponding joint transmit beamforming vector. 


The rest of this paper is organized as follows. In Section \ref{sec.model}, system model and problem formulation are introduced. Then, the proposed beamforming design for max-min fair SWIPT is presented in Section \ref{sec.beam_design}. Numerical simulation results are provided in Section \ref{sec.numerical}. Finally, Section \ref{sec.conclusion} concludes the paper. 

\section{System Model and Problem Formulation}\label{sec.model}

As shown in Fig. \ref{fig.cran}, a green Cloud-RAN system consists of $L$ RRHs with $M$ antennas, $K$ single-antenna DRs and $J$ single-antenna ERs. Each RRH $l \in \mathcal{L} = \{1,\ldots,L\}$ is powered by some renewable energy such as solar or wind power, and is connected to the BBU pool via a wireless fronthaul link of capacity $C_l$. 
Considering that the fronthaul links operate on mmWave frequencies with directional antennas, the interference between different wireless fronthaul links is negligible.
Meanwhile, all the ERs and DRs are served simultaneously in the downlink on the same frequency band. Specifically, each DR $k \in \mathcal{K} = \{1,\ldots,K\}$ is served by a network-wide beamforming vector $\mathbf{w}_k = [\mathbf{w}_{k1}^T,\ldots,\mathbf{w}_{kL}^T]^T \in \mathbb{C}^{ML\times 1}$, where $\mathbf{w}_{kl} \in \mathbb{C}^{M \times 1}$ is the beamforming vector at RRH $l$ for DR $k$. 
Similarly, each ER $j \in \mathcal{J} = \{1,\ldots,J\}$ is served by the beamforming vector $\mathbf{v}_j = [\mathbf{v}_{j1}^T,\ldots,\mathbf{v}_{jL}^T]^T \in \mathbb{C}^{ML\times 1}$.
The data symbol for DR $k$ and the energy symbol for ER $j$ are denoted by $s_k^\mathrm{D}$ and $s_j^{\mathrm{E}}$, respectively. 
Without loss of generality, we assume that the symbols are all independent with each other, which satisfy $\mathbb{E}[|s_k^\mathrm{D}|^2] = \mathbb{E}[|s_j^\mathrm{E}|^2] = 1$ for any $k \in \mathcal{K}$ and $j \in \mathcal{J}$. 
Notice that due to the energy symbol $s_j^\mathrm{E}$ is randomly generated, which carries no information but only satisfies the RF regulations. 
Hence, the received signal at each DR $k \in \mathcal{K}$ can be represented by 
\begin{align}
y_{k} = \mathbf{h}_k^H \mathbf{w}_k s_k^\mathrm{D} + \sum_{i=1,i\neq k}^{K} \mathbf{h}_k^H \mathbf{w}_i s_i^\mathrm{D} + \sum_{j=1}^J \mathbf{h}_k^H \mathbf{v}_j s_j^\mathrm{E} + n_k,
\end{align}
where $\mathbf{h}_k = [\mathbf{h}_{1k}^T,\ldots,\mathbf{h}_{Lk}^T]^T \in \mathbb{C}^{ML \times 1}$. Here, $\mathbf{h}_{lk} \in \mathbb{C}^{M \times 1}$ denotes the quasi-static complex channel vector from RRH $l$ to DR $k$, and $n_k$ is the additive white circularly symmetric complex Gaussian (CSCG) noise with identical variance $\sigma^2$ for each DR $k$.
Therefore, the SINR of DR $k$ is written by 
\begin{align}\label{eq.sinr}
\mathrm{SINR}_k = \frac{|\mathbf{h}_k^H \mathbf{w}_k|^2}{\sum\limits_{i=1,i\neq k}^{K} |\mathbf{h}_k^H \mathbf{w}_i|^2 + \sum\limits_{i=1}^J |\mathbf{h}_k^H \mathbf{v}_i|^2 + \sigma^2},
\end{align}
and thus the data rate at DR $k$ can be given by
\begin{align} \label{eq.rate_def}
R_k = \log(1 + \mathrm{SINR}_k).
\end{align}

\begin{figure}
\centering
\includegraphics[height = 6cm]{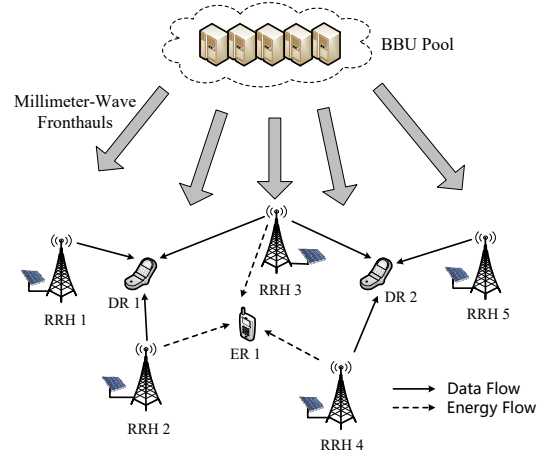}
\caption{A green Cloud-RAN system with wireless fronthaul links.}
\label{fig.cran}
\vspace{-1.5em}
\end{figure}

On the other hand, all the data symbols and energy symbols will be harvested by the ERs as RF energy. Thus, for each ER $j \in \mathcal{J}$, the harvested RF energy is proportional to the total received wireless signal power, which is given by
\begin{align}\label{eq.energy}
Q_j = \eta \left(\sum_{i=1}^K |\mathbf{g}_j^H \mathbf{w}_i|^2 + \sum_{i=1}^J |\mathbf{g}_j^H \mathbf{v}_i|^2\right),
\end{align}
where $\eta \in (0,1)$ is the RF energy conversion efficiency and $\mathbf{g}_{j} = [\mathbf{g}_{1j}^T,\ldots,\mathbf{g}_{Lj}^T]^T \in\mathbb{C}^{ML \times 1}$. Here, $\mathbf{g}_{lj} \in \mathbb{C}^{M \times 1}$ denotes the quasi-static complex channel vector from RRH $l$ to ER $j$.

In our model, we assume that the BBU pool can access global CSI of all DRs and ERs, based on which the sparse beamforming vectors $\{\mathbf{w}_k\}_{k=1}^K$ and $\{\mathbf{v}_j\}_{j=1}^J$ will be designed.
Due to the limited capacity of wireless fronthaul links, only a small group of RRHs will be selected to serve each DR $k$. 
If $\|\mathbf{w}_{kl}\|_2^2 \neq 0$, the data message for DR $k$ and the beamforming vector $\mathbf{w}_{kl}$ will be transmitted to RRH $l$.
If $\|\mathbf{w}_{kl}\|_2^2 = 0$, RRH $l$ is not associated with DR $k$.
For slow-varying channels, we only consider fronthaul consumption for data sharing, while the bandwidth required for CSI sharing and beamforming vector delivering can be ignored~\cite{dai2014sparse}. 
As a result, the total fronthaul bandwidth consumption of RRH $l$ can be written by $ \sum_{k=1}^K \left\| \|\mathbf{w}_{kl}\|_2^2 \right\|_0 \cdot R_k$, where the $l_0$-norm $\| \|\mathbf{w}_{kl}\|_2^2 \|_0$ denotes the association between DR $k$ and RRH $l$. 

Since each RRH is powered by renewable sources, we let $E_l$ denote green energy generated per second at RRH $l \in \mathcal{L}$. 
Notice that $E_l$ may not be equal for different RRHs, considering the spatial diversity of RRH deployment in different locations.
Moreover, the coherence time of wireless channel is much shorter than that of the renewable energy harvesting process at RRHs. I.e., the energy generation rate changes relatively slowly than CSI. 
Thus, the generated green energy $E_l$ at each RRH $l$ is assumed to be a pre-known constant~\cite{ng2015secure}. 

In summary, in order to maximize the minimum data rate among all the DRs, while guaranteeing each ER to be supplied with sufficient RF energy $Q_{\min}$, the downlink beamforming vector design for the above Cloud-RAN system can be formulated by an optimization problem as follows,
\begin{align}
\mathrm{(P1):} 
\max_{\{\mathbf{w}_{k}\}, \{\mathbf{v}_j\} } & \min_{k\in\mathcal{K}} \hspace{6 pt} {R}_k \nonumber
 \\
\mathrm{s.t.} \hspace{1em} & Q_j \geq Q_{\min},  \forall j \in \mathcal{J}, \label{eq.RF_energy_constraint} \\
& \sum_{k=1}^K \left\| \|\mathbf{w}_{kl}\|_2^2 \right\|_0 \cdot R_k \leq C_l, \forall l \in \mathcal{L}, \label{eq.fronthaul_constraint}\\
& \sum_{k=1}^K \|\mathbf{w}_{kl}\|_2^2 + \sum_{j=1}^J \|\mathbf{v}_{jl}\|_2^2 \leq E_l, \forall l \in \mathcal{L}, \label{eq.EH_energy_constraint}
\end{align}
where \eqref{eq.RF_energy_constraint} guarantees that the RF energy harvested by each ER is not lower than the RF energy target $Q_{\min}$, \eqref{eq.fronthaul_constraint} holds because the total fronthaul bandwidth consumption is limited by the link capacity $C_l$ at each RRH $l$. 
Moreover, \eqref{eq.EH_energy_constraint} represents the total transmission power at each RRH $l$ is constrained by the generated green energy $E_l$.

\begin{remark}
(P1) is a non-convex optimization problem because of the objective function and the constraints in \eqref{eq.RF_energy_constraint} and \eqref{eq.fronthaul_constraint}.
Particularly, the $l_0$-norm of the fronthaul capacity constraints in \eqref{eq.fronthaul_constraint} makes this problem even challenging to solve. We will show later that it can be approximated by using the reweighted $l_1$-norm.  
\end{remark}

\section{Beamforming Design for Max-Min Fair SWIPT}\label{sec.beam_design}

To design beamforming vectors for max-min fair SWIPT, a feasibility analysis will be firstly conducted to obtain the maximum target RF energy, and then the optimal joint beamforming design to maximize the minimum data rate of all DRs will be presented.

\subsection{Feasibility Analysis}
Due to the requirement of RF energy harvesting in \eqref{eq.RF_energy_constraint}, (P1) may not be always feasible, which makes it necessary to verify the feasibility of the target RF energy constraint $Q_{\min}$.
Thus, we have the following problem 
\begin{align}
\mathrm{(P2):} \max_{\{\mathbf{v}_j\}}  \hspace{4pt} & \min_{j\in\mathcal{J}} \hspace{6 pt}  \eta\sum_{i=1}^J |\mathbf{g}_j^H \mathbf{v}_i|^2  \nonumber
 \\
\mathrm{s.t.} \hspace{0.5em} & \sum_{j=1}^J \|\mathbf{v}_{jl}\|^2 \leq E_l, \forall l \in \mathcal{L}, \label{eq.EH_energy_constraint_Q}
\end{align}
where only energy beamforming is considered. As a result, all the beamforming vectors for data transmission as well as the wireless fronthaul link capacity constraints are removed from (P1).
Although (P2) is still a non-convex optimization problem, SDR can be applied to obtain the following problem, 
\begin{align}
\mathrm{(P3):} \max_{\{\mathbf{V}_j \succeq 0 \}}  \hspace{3pt} & \min_{j\in\mathcal{J}} \hspace{6 pt} \eta \sum_{i=1}^J \trace(\mathbf{G}_j\mathbf{V}_i) \nonumber
 \\
\mathrm{s.t.} \hspace{0.5em} & \sum_{j=1}^J \trace(\mathbf{V}_j\mathbf{A}_l) \leq E_l, \forall l \in \mathcal{L}, \label{eq.EH_energy_constraint_Q_matrix}
\end{align}
where we define $\mathbf{G}_j = \mathbf{g}_j \mathbf{g}_j^H$, $\mathbf{V}_j = \mathbf{v}_j \mathbf{v}_j^H$ and the block diagonal matrices $\mathbf{A}_l$ are defined as 
\begin{align}
\mathbf{A}_l = \diag{(\underbrace{0,\ldots,0}_{(l-1)M},\underbrace{1,\ldots,1}_{M},\underbrace{0,\ldots,0}_{(L-l)M})},\forall l \in \mathcal{L}.
\end{align}
It is worth noting that the rank-one constraint is relaxed for the energy beamforming covariance matrices $\{\mathbf{V}_j\}_{j \in \mathcal{J}}$.
Since point-wise minimum preserves concavity, (P3) is a convex optimization problem where the strong duality holds, which can be then efficiently solved by the interior point method~\cite{boyd2004convex}.
Furthermore, it can be proved that the optimal solution satisfies $\rank(\mathbf{V}_j) \leq 1$ for all $j \in \mathcal{J}$, which closely follows the proof in~\cite[Proposition 3.1]{xu2014multiuser} and will be omitted here due to page limitation.
By solving (P2), we now obtain the maximum RF energy target, i.e., the maximum value we can set for $Q_{\min}$.


\subsection{Optimal Beamforming Design}

Now we can consider (P1) under a feasible RF energy target $Q_{\min}$. As remarked after (P1), due to the $l_0$-norm fronthaul link capacity constraints in \eqref{eq.fronthaul_constraint}, it is challenging to obtain the global optimal solution to (P1). Thus, we will focus on algorithms to derive the local optimum of (P1).
Inspired by the approximation of $l_0$-norm using a convex reweighted $l_1$-norm widely adopted in compressive sensing~\cite{candes2008enhancing}, the total fronthaul bandwidth consumption can be written by
\begin{align}
\sum_{k=1}^K \left\|  \|\mathbf{w}_{kl}\|_2^2 \right\|_0\cdot R_k & \approx \sum_{k=1}^K \beta_{kl} \left\| \|\mathbf{w}_{kl}\|^2_2 \right\|_1 \cdot R_k, \\
 & = \sum_{k=1}^K \beta_{kl} \|\mathbf{w}_{kl}\|^2_2 \cdot R_k.
\end{align}
According to~\cite{dai2014sparse}, (P1) can be effectively solved with proper weights $\beta_{kl}$. 
To this end, the weights can be updated iteratively using the following formula,
\begin{align} \label{eq.beta_def}
\beta_{kl} = \frac{1}{\|\mathbf{w}_{kl}\|^2_2 + \tau}, \forall k \in \mathcal{K}, \forall l \in \mathcal{L},
\end{align}
where $\tau > 0$ is a small constant regularization factor and $\mathbf{w}_{kl}$ is the corresponding beamforming vector derived in the last round iteration. 


Even with the above approximation, the constraints in \eqref{eq.fronthaul_constraint} is still difficult to handle because of the non-convex term $R_k$. 
To address this, we propose to solve (P1) iteratively with $\beta_{kl}$ and $\hat{R}_k$ updated from last round iteration. 
In this way, by denoting $\mathbf{H}_k =  \mathbf{h}_k \mathbf{h}_k^H$, $\mathbf{W}_k =  \mathbf{w}_k \mathbf{w}_k^H$, (P1) can be reformulated as the following relaxed problem,
\begin{align}
(\mathrm{P4}):\hspace{2em} & \nonumber \\
\hspace{-0.8em} \max_{\{\mathbf{W}_{k}\}, \{\mathbf{V}_j\}}\hspace{1.2em} & \hspace{-1.2em}\min_{k \in \mathcal{K}} \frac{\trace(\mathbf{H_k}\mathbf{W}_k)}{ \sum\limits_{i=1,i\neq k}^K \trace(\mathbf{H_k} \mathbf{W}_i) + \sum\limits_{i=1}^J \trace(\mathbf{H}_k \mathbf{V}_i) + \sigma^2} \nonumber
 \\
\mathrm{s.t.} \hspace{2.5em} 
& \hspace{-1.5em} \sum_{i=1}^K \trace(\mathbf{G}_j \mathbf{W}_i) + \sum_{i=1}^J \trace(\mathbf{G}_j \mathbf{V}_i) \geq Q_{\min} / \eta, \forall j \in \mathcal{J}, \label{eq.RF_energy_constraint_matrix} \\
& \hspace{-1.5em} \sum_{k=1}^K \beta_{kl}\trace(\mathbf{W}_{k}\mathbf{A}_l)\hat{R}_k \leq C_l, \forall l \in \mathcal{L}, \label{eq.fronthaul_constraint_fixed}\\
& \hspace{-1.5em} \sum_{k=1}^K \trace(\mathbf{W}_{k}\mathbf{A}_l) + \sum_{j=1}^J \trace(\mathbf{V}_{j}\mathbf{A}_l) \leq E_l, \forall l \in \mathcal{L}, \label{eq.EH_energy_constraint_single_Matrix} \\
& \hspace{-1.5em} \mathbf{W}_{k} \succeq 0, \mathbf{V}_j \succeq 0, \forall k\in\mathcal{K}, \forall j\in\mathcal{J}, \label{eq.matrix_semidefinite}
\end{align}
where the rank-one constraints for all beamforming covariance matrices $\{\mathbf{W}_k\}_{k\in\mathcal{K}}$ and $\{\mathbf{V}_j\}_{j\in \mathcal{J}}$ are relaxed. 
However, it can be proved later that the optimal covariance matrices for (P4) are all rank-one. 
Notice that (P4) is still a non-convex optimization problem due to the objective function. 
Nevertheless, we can associate (P4) with its inverse problem, which can be represented by a weighted peak power minimization problem for all the RRHs as follows, 
\begin{align}
{(\mathrm{P5}):} \min_{\{\mathbf{W}_{k}\}, \{\mathbf{V}_j\}} \hspace{3em} & \hspace{-3em} \max_{l \in \mathcal{L}} \frac{\sum\limits_{k=1}^K \trace(\mathbf{W}_{k}\mathbf{A}_l) + \sum\limits_{j=1}^J \trace(\mathbf{V}_{j}\mathbf{A}_l)}{E_l} \nonumber
 \\
\mathrm{s.t.} \hspace{4.5em} 
& \hspace{-3em} \frac{1}{\gamma}\trace(\mathbf{H_k}\mathbf{W}_k) - \sum_{i=1,i\neq k}^K \trace(\mathbf{H_k} \mathbf{W}_i) \nonumber \\
& \hspace{-2.5em} - \sum_{i=1}^J \trace(\mathbf{H}_k \mathbf{V}_i) - \sigma^2 \geq 0, \forall k \in \mathcal{K}, \label{eq.SINR_constraint_matrix} \\
& \hspace{-3em} \eqref{eq.RF_energy_constraint_matrix}, \hspace{0.5em} \eqref{eq.fronthaul_constraint_fixed} \hspace{0.5em} \mathrm{and} \hspace{0.5em} \eqref{eq.matrix_semidefinite}. \nonumber
\end{align}
where we set a common SINR target $\gamma$ for all the DRs, with retaining the other constraints in (P4) except for the transmission power constraints in \eqref{eq.EH_energy_constraint_single_Matrix}.

In order to solve (P5), it can be finally reformulated into the following equivalent form, 
\begin{align}
{(\mathrm{P6}):}\min_{\{\mathbf{W}_{k}\}, \{\mathbf{V}_j\},\rho} \hspace{1em} &  \rho \nonumber
 \\
\mathrm{s.t.} \hspace{6em} 
& \hspace{-5em} \sum_{k=1}^K \trace(\mathbf{W}_{k}\mathbf{A}_l) + \sum_{j=1}^J \trace(\mathbf{V}_{j}\mathbf{A}_l)\leq \rho \cdot E_l, \forall l \in \mathcal{L}, \label{eq.energy_max_green_cons}  \\ 
& \hspace{-5em} \eqref{eq.RF_energy_constraint_matrix}, \hspace{0.5em} \eqref{eq.fronthaul_constraint_fixed}, \hspace{0.5em} \eqref{eq.matrix_semidefinite}\hspace{0.5em} \mathrm{and} \hspace{0.5em} \eqref{eq.SINR_constraint_matrix}, \nonumber
\end{align}
which is a convex optimization problem that can be efficiently solved by the interior method. 
Thus, the optimal solution of (P5) can be obtained from (P6). 
It can be easily verified that the optimal value of (P6) is a non-decreasing function of $\gamma$.
Moreover, the optimal beamforming covariance matrices $\{\mathbf{W}^*_k\}_{k\in \mathcal{K}}$ and $\{\mathbf{V}^*_j\}_{j\in \mathcal{J}}$ can be proved to be rank-one. 

\begin{lemma}\label{lm.rank_one}
If receiver channels are independently distributed, the optimal solution to (P6) satisfies $\rank(\mathbf{W}^*_k) \leq 1$, $\forall k \in \mathcal{K}$, and $\rank(\mathbf{V}^*_j) \leq 1$, $\forall j \in \mathcal{J}$, with probability one.
\end{lemma}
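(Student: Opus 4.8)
The plan is to characterize the optimum of (P6) through its KKT system. Since (P6) is a convex SDP that admits a strictly feasible point (take all $\mathbf{W}_k,\mathbf{V}_j\succ0$ with $\rho$ large enough to satisfy \eqref{eq.energy_max_green_cons}, which is compatible with \eqref{eq.RF_energy_constraint_matrix}, \eqref{eq.fronthaul_constraint_fixed} and \eqref{eq.SINR_constraint_matrix} for a feasible target), Slater's condition holds, strong duality applies, and the KKT conditions are necessary and sufficient for optimality. First I would form the Lagrangian, attaching $\mu_l\ge0$ to the peak-power constraints \eqref{eq.energy_max_green_cons}, $\lambda_j\ge0$ to the RF-energy constraints \eqref{eq.RF_energy_constraint_matrix}, $\nu_l\ge0$ to the fronthaul constraints \eqref{eq.fronthaul_constraint_fixed}, $\alpha_k\ge0$ to the SINR constraints \eqref{eq.SINR_constraint_matrix}, and positive semidefinite multipliers $\mathbf{Y}_k\succeq0$, $\mathbf{Z}_j\succeq0$ to the conic constraints \eqref{eq.matrix_semidefinite}.

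Enforcing stationarity in $\mathbf{W}_k$ and in $\mathbf{V}_j$ then yields the dual slack matrices
\begin{align}
\mathbf{Y}_k &= \mathbf{\Phi}_k - \tfrac{\alpha_k}{\gamma}\mathbf{H}_k, \nonumber \\
\mathbf{\Phi}_k &:= \sum_{l}\big(\mu_l+\nu_l\beta_{kl}\hat{R}_k\big)\mathbf{A}_l \nonumber \\
&\quad - \sum_{j}\lambda_j\mathbf{G}_j + \sum_{i\neq k}\alpha_i\mathbf{H}_i, \nonumber \\
\mathbf{Z} &= \sum_{l}\mu_l\mathbf{A}_l - \sum_{j}\lambda_j\mathbf{G}_j + \sum_{k}\alpha_k\mathbf{H}_k, \nonumber
\end{align}
where a direct check shows the slack $\mathbf{Z}_j$ for the energy beamformers is in fact independent of $j$, i.e. $\mathbf{Z}_j=\mathbf{Z}$ for all $j$. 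Complementary slackness gives $\mathbf{Y}_k\mathbf{W}_k^\ast=0$ and $\mathbf{Z}\mathbf{V}_j^\ast=0$, so $\mathrm{range}(\mathbf{W}_k^\ast)\subseteq\mathrm{null}(\mathbf{Y}_k)$ and $\mathrm{range}(\mathbf{V}_j^\ast)\subseteq\mathrm{null}(\mathbf{Z})$; it therefore suffices to bound these null-space dimensions by one.

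The key reduction is to prove $\mathbf{\Phi}_k\succ0$. Granting this, every $\mathbf{x}\in\mathrm{null}(\mathbf{Y}_k)$ satisfies $\mathbf{\Phi}_k\mathbf{x}=\tfrac{\alpha_k}{\gamma}(\mathbf{h}_k^H\mathbf{x})\mathbf{h}_k$, hence $\mathbf{x}=\tfrac{\alpha_k}{\gamma}(\mathbf{h}_k^H\mathbf{x})\mathbf{\Phi}_k^{-1}\mathbf{h}_k$, so every null vector is a scalar multiple of the single vector $\mathbf{\Phi}_k^{-1}\mathbf{h}_k$ (and $\mathrm{null}(\mathbf{Y}_k)=\{0\}$ in the degenerate case $\alpha_k=0$). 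Thus $\dim\mathrm{null}(\mathbf{Y}_k)\le1$ and $\rank(\mathbf{W}_k^\ast)\le1$. An identical scalar-multiple argument applied to the common matrix $\mathbf{Z}$, once $\mathbf{Z}$ is shown to lose at most one dimension, yields $\rank(\mathbf{V}_j^\ast)\le1$ for every $j$; note this even forces all energy beamformers to share one direction, which is harmless since $s_j^{\mathrm{E}}$ carries no information.

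The hard part is establishing $\mathbf{\Phi}_k\succ0$ (equivalently $\rank(\mathbf{Y}_k)\ge ML-1$), because the RF-energy constraints contribute the \emph{subtracted} rank-$\le J$ term $-\sum_j\lambda_j\mathbf{G}_j$, which can erode rank. This is precisely where the hypothesis of independently distributed receiver channels is needed. I would first argue that the binding peak-power and fronthaul constraints render the block-diagonal ``backbone'' $\sum_l(\mu_l+\nu_l\beta_{kl}\hat{R}_k)\mathbf{A}_l$ positive definite on the active RRHs, and then invoke a general-position argument: for $\{\mathbf{h}_k\}$ and $\{\mathbf{g}_j\}$ drawn from continuous, mutually independent distributions, the finite rank-one family $\{\mathbf{H}_i\}_{i\neq k}$, $\{\mathbf{G}_j\}$ together with $\mathbf{h}_k$ lies in general position, so the set of realizations for which $\mathbf{\Phi}_k$ (or $\mathbf{Z}$) drops two or more dimensions has Lebesgue measure zero. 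Combining these excludes the degenerate case with probability one and completes the proof. This argument parallels \cite[Proposition 3.1]{xu2014multiuser}, the new ingredients being the fronthaul multiplier $\nu_l$ appearing in $\mathbf{\Phi}_k$ and the $j$-independent energy-slack matrix $\mathbf{Z}$.
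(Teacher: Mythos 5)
Your proof skeleton is the same as the paper's: you form the Lagrangian of (P6), read off the stationarity conditions (your $\mathbf{\Phi}_k$ is the paper's $\mathbf{B}^*_k$, your $\mathbf{Y}_k$ its $\mathbf{X}^*_k$), observe that the slack matrix for the energy beamformers is common to all $j$ (correct --- the paper writes it as $\mathbf{D}^*_j-\nu_j\mathbf{G}_j$, which is indeed $j$-independent), and finish by a null-space count that is equivalent to the paper's rank inequality. The pivotal step, however, is establishing $\mathbf{\Phi}_k\succ 0$, and that is exactly where your argument has genuine gaps. First, the claim that binding peak-power and fronthaul constraints make the backbone $\sum_l(\mu_l+\nu_l\beta_{kl}\hat{R}_k)\mathbf{A}_l$ positive definite is unsupported: stationarity in $\rho$ only forces $\sum_l\mu_l E_l=1$, so at least one --- but by no means every --- $\mu_l$ is positive; any RRH whose constraints \eqref{eq.energy_max_green_cons} and \eqref{eq.fronthaul_constraint_fixed} are both inactive has $\mu_l=\nu_l=0$ and contributes a zero block, and definiteness ``on the active RRHs'' is not definiteness on $\mathbb{C}^{ML}$. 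Second, the general-position argument is circular: the multipliers $\mu_l,\nu_l,\lambda_j,\alpha_i$ entering $\mathbf{\Phi}_k$ are themselves functions of the channel realization, so $\mathbf{\Phi}_k$ is not a fixed linear combination of random rank-one matrices to which a Lebesgue-measure-zero statement can be directly applied. Third, your threshold is mis-calibrated: ruling out a rank drop of ``two or more dimensions'' in $\mathbf{\Phi}_k$ is not enough. If $\mathbf{\Phi}_k$ has a one-dimensional null space spanned by $\mathbf{u}$, dual feasibility $\mathbf{Y}_k\succeq 0$ forces $\alpha_k|\mathbf{h}_k^H\mathbf{u}|^2=0$ and hence $\mathbf{Y}_k\mathbf{u}=0$, so $\mathrm{null}(\mathbf{Y}_k)$ can be two-dimensional and you only get $\rank(\mathbf{W}^*_k)\leq 2$; your reduction needs $\mathbf{\Phi}_k$ nonsingular, i.e., no rank drop at all.

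The paper closes this step differently, and its route is the one you should adopt. Note first that dual feasibility already gives $\mathbf{\Phi}_k=\mathbf{Y}_k+\frac{\alpha_k}{\gamma}\mathbf{H}_k\succeq 0$ for free, so only singularity must be excluded. The paper argues by contradiction using boundedness of the dual function: if $\mathbf{B}^*_{k_0}$ had a non-positive eigenvalue with eigenvector $\mathbf{w}_{k_0}$, substituting $\mathbf{W}_{k_0}=\kappa\mathbf{w}_{k_0}\mathbf{w}_{k_0}^H$ into the inner minimization of the Lagrangian at the optimal dual point makes both $\kappa\mathbf{w}_{k_0}^H\mathbf{B}^*_{k_0}\mathbf{w}_{k_0}$ and $-\kappa\mathbf{w}_{k_0}^H\left(\frac{\mu^*_{k_0}}{\gamma}\mathbf{H}_{k_0}+\mathbf{X}^*_{k_0}\right)\mathbf{w}_{k_0}$ non-positive, so the value diverges to $-\infty$ as $\kappa\rightarrow\infty$, contradicting strong duality since the dual optimum equals the finite, non-negative optimal value of (P6). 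This requires no assumption about which power or fronthaul constraints are active and no general-position reasoning for the negative-eigenvalue case; the probabilistic hypothesis of independently distributed channels is reserved solely for excluding an exact zero eigenvalue of $\mathbf{B}^*_k$, which is the only place where ``with probability one'' enters the paper's proof.
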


\begin{proof}
Please refer to Appendix \ref{app.rank_one}.
\end{proof}

\begin{remark}
From Lemma \ref{lm.rank_one}, we know that the optimal solution of (P5) is also rank-one. Thus, the optimal transmit covariance matrices $\{\mathbf{W}^*_k\}_{k\in \mathcal{K}}$ and $\{\mathbf{V}^*_j\}_{j\in \mathcal{J}}$ can be decomposed into vectors $\{\mathbf{w}^*_k\}_{k\in\mathcal{K}}$ and $\{\mathbf{v}^*_j\}_{j\in\mathcal{J}}$, respectively.
\end{remark}

In order to solve (P4), it will be connected with (P5) in the following lemma. 
To start with, it is worth noting that the optimal value of (P4) represents the maximum common SINR $\gamma_{\max}$ for all the DRs.
On the other hand, for a common SINR target $\gamma$, the optimal value of (P5) stands for the minimum weighted peak power consumption denoted as $h(\gamma)$.
For the sake of convenience, we define that a common SINR target $\gamma$ is achievable once it satisfies $\gamma \leq \gamma_{\max}$. 
In this way, these two problems can be connected in the following lemma.

\begin{lemma}\label{lm.sinr_achievable}
The common SINR target $\gamma$ is achievable if and only if it satisfies $h(\gamma) \leq 1$. 
\end{lemma}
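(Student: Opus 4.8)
The plan is to establish the equivalence by a direct feasibility-transfer argument between (P4) and (P5), relying on the observation that the two problems share the RF-energy constraint \eqref{eq.RF_energy_constraint_matrix}, the fronthaul constraint \eqref{eq.fronthaul_constraint_fixed} and the semidefinite constraint \eqref{eq.matrix_semidefinite}, while the per-RRH power budget \eqref{eq.EH_energy_constraint_single_Matrix} of (P4), namely $\sum_{k}\trace(\mathbf{W}_k\mathbf{A}_l)+\sum_{j}\trace(\mathbf{V}_j\mathbf{A}_l)\le E_l$ for all $l$, is exactly equivalent to requiring the weighted peak-power objective of (P5) to be at most one. Thus any configuration $(\{\mathbf{W}_k\},\{\mathbf{V}_j\})$ that attains a common SINR at least $\gamma$ with peak-power ratio at most one simultaneously witnesses both ``$\gamma$ achievable'' and ``$h(\gamma)\le 1$''. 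I would prove the two implications separately.

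For the ``only if'' direction, suppose $\gamma$ is achievable, i.e., $\gamma\le\gamma_{\max}$. Take an optimal solution of (P4); its minimum SINR equals $\gamma_{\max}\ge\gamma$, so every DR $k$ satisfies $\trace(\mathbf{H}_k\mathbf{W}_k)\ge\gamma\big(\sum_{i\neq k}\trace(\mathbf{H}_k\mathbf{W}_i)+\sum_i\trace(\mathbf{H}_k\mathbf{V}_i)+\sigma^2\big)$, which is precisely the SINR constraint \eqref{eq.SINR_constraint_matrix} of (P5) with target $\gamma$. Because this solution obeys the budget \eqref{eq.EH_energy_constraint_single_Matrix}, its peak-power ratio $\max_l\{(\sum_k\trace(\mathbf{W}_k\mathbf{A}_l)+\sum_j\trace(\mathbf{V}_j\mathbf{A}_l))/E_l\}$ is at most one, and the remaining constraints \eqref{eq.RF_energy_constraint_matrix}, \eqref{eq.fronthaul_constraint_fixed}, \eqref{eq.matrix_semidefinite} are inherited unchanged. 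Hence this point is feasible for (P5) with objective value at most one, so the optimum satisfies $h(\gamma)\le 1$.

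For the ``if'' direction, suppose $h(\gamma)\le 1$, so that (P5), equivalently (P6), is feasible. Let $(\{\mathbf{W}_k\},\{\mathbf{V}_j\})$ attain the optimum; its peak-power ratio is $h(\gamma)\le 1$, which forces $\sum_k\trace(\mathbf{W}_k\mathbf{A}_l)+\sum_j\trace(\mathbf{V}_j\mathbf{A}_l)\le E_l$ for every $l$, i.e., the budget \eqref{eq.EH_energy_constraint_single_Matrix} holds. Together with \eqref{eq.SINR_constraint_matrix}, which gives $\mathrm{SINR}_k\ge\gamma$ for all $k$, and the shared constraints \eqref{eq.RF_energy_constraint_matrix}, \eqref{eq.fronthaul_constraint_fixed}, \eqref{eq.matrix_semidefinite}, this point is feasible for (P4) and yields a minimum SINR of at least $\gamma$. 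Therefore $\gamma_{\max}\ge\gamma$, i.e., $\gamma$ is achievable.

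The argument is mostly a matter of transferring constraints cleanly, so the main obstacle is the edge cases rather than any deep estimate. I would be careful to (i) define $h(\gamma)=+\infty$ when (P5) is infeasible, so that a non-achievable $\gamma$ --- which may fail even with unbounded transmit power because the data beamforming power is already capped by the fronthaul bound \eqref{eq.fronthaul_constraint_fixed} --- still correctly yields $h(\gamma)>1$; and (ii) verify that the rearrangement between the SINR-ratio form of (P4) and the linear form \eqref{eq.SINR_constraint_matrix} of (P5) is exact, which is immediate since the denominator in $\mathrm{SINR}_k$ is strictly positive. A short monotonicity remark, that $h$ is non-decreasing in $\gamma$ and that scaling all covariances up while $\sigma^2$ stays fixed strictly increases every SINR, can be appended to confirm the threshold behavior, but it is not needed for the stated equivalence.
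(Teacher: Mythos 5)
Your proposal is correct and takes essentially the same approach as the paper: both directions are feasibility-transfer arguments, moving an optimal solution of (P5) into (P4) to show sufficiency, and an SINR-$\gamma$-achieving solution of (P4) into (P5) to show necessity. The only difference is stylistic --- you prove the necessity direction directly (a feasible point of (P5) with objective at most one implies $h(\gamma)\leq 1$ since $h$ is a minimum), whereas the paper phrases the identical argument as a proof by contradiction.
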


\floatstyle{spaceruled}
\restylefloat{algorithm}
\begin{algorithm}[t] 
\caption{Bisection Search for $\gamma_{\max}$}\label{ag.1}
\begin{algorithmic}[1]
\State Set the initial upper and lower bounds for $\gamma_{\max}$ as $\gamma_{L} = 0$ and $\gamma_{U} = \max\limits_{k\in\mathcal{K}} \frac{\left(\sum\limits_{l=1}^L \sqrt{E_l}\|\mathbf{h}_{lk}\|_2 \right)^2}{\sigma^2}$.
\State Set $\gamma = \frac{\gamma_{L} + \gamma_{U}}{2}$ and then solve (P5); 
\While{$| h(\gamma) - 1| > \epsilon$}
\If{$h(\gamma) > 1$}
\State Update $\gamma_{L} = \gamma$;
\Else
\State Update $\gamma_{U} = \gamma$;
\EndIf
\State Update $\gamma = \frac{\gamma_{L} + \gamma_{U}}{2}$ and then solve (P5); 
\EndWhile
\State Return the optimal value $\gamma_{\max} = \gamma$ and the corresponding beamforming vectors $\{\mathbf{w}^*_k\}_{k\in\mathcal{K}}$ and $\{\mathbf{v}^*_j\}_{j\in\mathcal{J}}$ 
 by decomposing $\{\mathbf{W}^*_k\}_{k\in \mathcal{K}}$ and $\{\mathbf{V}^*_j\}_{j\in \mathcal{J}}$. 
\end{algorithmic}
\end{algorithm}

\floatstyle{spaceruled}
\restylefloat{algorithm}
\begin{algorithm}[t] 
\caption{Max-Min SINR Beamforming Design for (P1)}\label{ag.2}
\begin{algorithmic}[1]
\State Set the initial value for $\beta^{(0)}_{kl}$, $\hat{R}^{(0)}_{k}$ for $\forall k \in \mathcal{K}$, $\forall l \in \mathcal{L}$.
\State Set $n = 0$;
\While{$|\beta^{(n)}_{kl} - \beta^{(n-1)}_{kl}| > \epsilon_1$ or $|\hat{R}^{(n)}_k - \hat{R}^{(n-1)}_k| > \epsilon_2$}
\State Fixing $\beta^{(n)}_{kl}$, $\hat{R}^{(n)}_{k}$, solve (P4) to obtain the optimal value $\gamma^{(n)}_{\max}$ and the corresponding beamforming vectors $\{\mathbf{w}^*_k\}_{k\in\mathcal{K}}$ and $\{\mathbf{v}^*_j\}_{j\in\mathcal{J}}$;
\State Update $n = n+1$, $\beta^{(n)}_{kl} = \frac{1}{\trace(\mathbf{W}_{k}\mathbf{A}_l) + \tau}$ and $\hat{R}^{(n)}_k = \log\Bigg(1+\frac{\trace(\mathbf{H_k}\mathbf{W}_k)}{ \sum\limits_{i=1,i\neq k}^K \trace(\mathbf{H_k} \mathbf{W}_i) + \sum\limits_{i=1}^J \trace(\mathbf{H}_k \mathbf{V}_i) + \sigma^2}\Bigg)$,
\EndWhile
\State Return the optimal SINR $\gamma^* = \gamma^{(n)}_{\max}$ and the corresponding beamforming vectors $\{\mathbf{w}^*_k\}_{k\in\mathcal{K}}$ and $\{\mathbf{v}^*_j\}_{j\in\mathcal{J}}$.
\end{algorithmic}
\end{algorithm}

\begin{proof}
Firstly, it is straightforward to show that the common SINR target $\gamma$ is achievable when $h(\gamma) \leq 1$ holds.
For a given $\gamma$, $h(\gamma) \leq 1$ means $\frac{\sum\limits_{k=1}^K \trace(\mathbf{W}^*_{k}\mathbf{A}_l) + \sum\limits_{j=1}^J \trace(\mathbf{V}^*_{j}\mathbf{A}_l)}{E_l} \leq 1$ holds for all $l\in \mathcal{L}.$
Then, applying the optimal covariances $\{\mathbf{W}^*_k\}_{k\in \mathcal{K}}$ and $\{\mathbf{V}^*_j\}_{j\in \mathcal{J}}$ of (P5) to (P4), it can be easily verified that all the constraints in (P4) hold and thus we can know from \eqref{eq.SINR_constraint_matrix} that the common SINR target $\gamma$ satisfies $\gamma \leq \gamma_{\max}$.

On the other hand, we prove the necessity by contradiction. Suppose there exists an achievable $\gamma^\prime$ such that $h(\gamma^\prime) > 1$, where the transmit covariances to achieve such $\gamma^\prime$ in (P4) are denoted by $\{\mathbf{W}^\prime_k\}_{k\in \mathcal{K}}$ and $\{\mathbf{V}^\prime_j\}_{j\in \mathcal{J}}$.
Thus, applying the same covariance matrices in (P5), it can be verified that all the constraints in (P5) can be satisfied and a lower optimal peak power consumption $h(\gamma^\prime) \leq 1$ can be obtained, which contradicts with the assumption. Therefore, an achievable $\gamma^\prime$ will guarantee $h(\gamma^\prime) \leq 1$, which completes the proof. 
\end{proof}

\begin{remark}
According to Lemma \ref{lm.sinr_achievable} and the monotonicity of $h(\gamma)$, we know that the optimal value of (P4), i.e., $\gamma_{\max}$, satisfies $h(\gamma_{\max}) = 1$. Moreover, following the same optimal covariance matrices $\{\mathbf{W}^*_k\}_{k\in \mathcal{K}}$ and $\{\mathbf{V}^*_j\}_{j\in \mathcal{J}}$ obtained from (P5), the optimal solution to (P4) will be also rank-one.
In this way, for fixed factors $\beta_{kl}$ and $\hat{R}_{k}$, (P4) can be solved and $\gamma_{\max}$ can be obtained by one-dimension bisection search over $\gamma$, which is summarized in Algorithm \ref{ag.1}. 
Consequently, for the original (P1), the optimal max-min SINR beamforming vectors $\{\mathbf{w}^*_k\}_{k\in\mathcal{K}}$ and $\{\mathbf{v}^*_j\}_{j\in\mathcal{J}}$ can be obtained by iteratively solving (P4) using updated factors $\beta_{kl}$ and $\hat{R}_{k}$ according to \eqref{eq.beta_def} and \eqref{eq.rate_def}, which is also summarized in Algorithm \ref{ag.2}.
\end{remark}

\section{Numerical Analysis}\label{sec.numerical}
In this section, the proposed joint beamforming algorithm will be validated by numerical simulations and compared with other separate beamforming strategies. 
The network topology is shown in Fig. \ref{fig.system}, where there are $L = 3$ RRHs, $K = 6$ DRs and $J = 3$ ERs randomly deployed in the Cloud-RAN system. 
Notice that each RRH is equipped with $M = 2$ antennas. 
The channel power gain is modeled as $10^{-3}a/{d^{\alpha}}$, where $d$ is the distance in meters, $\alpha$ is the path-loss exponent set as $\alpha = 3$ and $a \sim \exp(1)$ is the Rayleigh fading. For all simulations, the results are averaged by 100 channel realizations.
Besides, we assume that the system bandwidth is $1$ MHz and the additive white Gaussian noise at the data receiver has a power spectral density $N_0 = 10^{-15}$ W/Hz.
For each RRH $l\in\mathcal{L}$, the generated green energy is assumed to be equal, i.e., $E_l = E_L$. 
Moreover, the capacity limit for the wireless fronthaul link is also equal, i.e., $C_l = C_L$.
Besides, the energy conversion efficiency factor for RF energy harvesting is $\eta = 50\%$.

\begin{figure}
\centering
\includegraphics[height = 7cm]{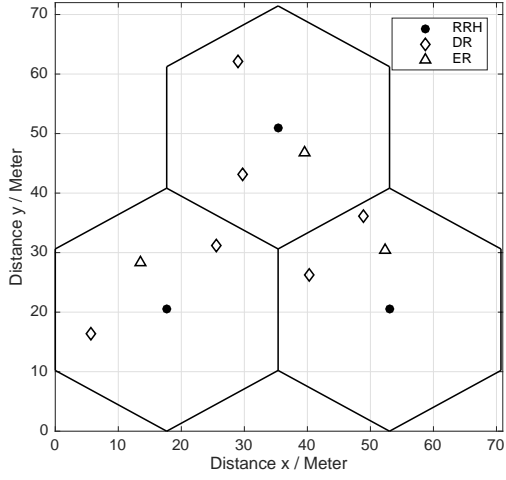}
\caption{Topology of the simulated Cloud-RAN system.}
\label{fig.system}
\end{figure}

\begin{figure}
\centering
\includegraphics[height = 7cm]{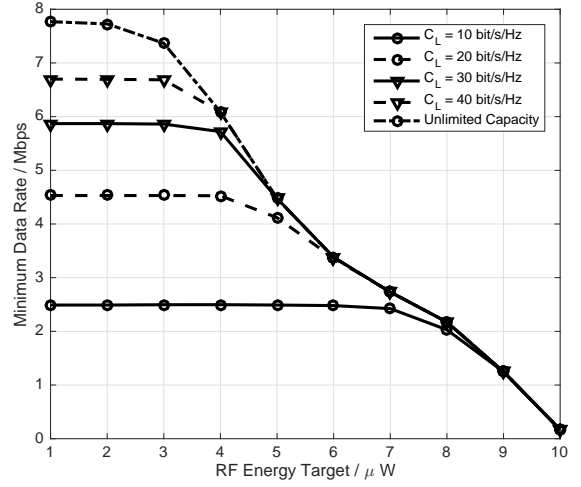}
\caption{Given different fronthaul capacity $C_L$, performance of max-min data rate versus RF energy target $Q_{\min}$ for generated green energy $E_L = 5$W.}
\label{fig.region_rate_q}
\vspace{-1em}
\end{figure}

\begin{figure}
\centering
\includegraphics[height = 7cm]{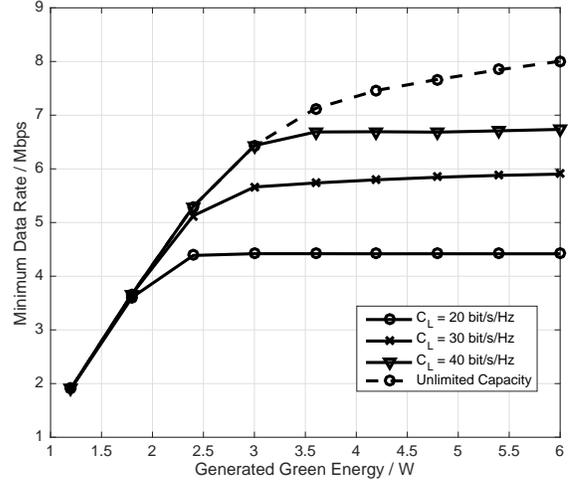}
\caption{Given different fronthaul capacity $C_L$, performance of max-min data rate versus the generated green energy $E_L$ for RF energy target $Q_{\min} = 1 \mu$W.}
\label{fig.region_rate_el_cl}
\vspace{-1em}
\end{figure}

In Fig. \ref{fig.region_rate_q}, performance of the max-min data rate versus the RF energy target $Q_{\min}$ is illustrated. Given the fronthaul capacity, it can be seen that the max-min data rate decreases as $Q_{\min}$ grows. Moreover, for the same $Q_{\min}$, the max-min data rate will become larger when given higher fronthaul capacity, which, however, approaches the ultimate max-min data rate for unlimited fronthaul capacity. 
Therefore, it can be inferred that the fronthaul capacity highly impacts the max-min rate for lower RF energy target, while a higher RF energy target dominates the max-min rate performance, regardless of the fronthaul capacity limit.

In Fig. \ref{fig.region_rate_el_cl}, performance of the max-min data rate versus the generated green energy $E_L$ is shown. Given the fronthaul capacity, it can be observed that as the generated green energy $E_L$ grows, the max-min data rate will increase accordingly. When $E_L$ becomes sufficiently large, the max-min data rate will finally saturate at some upper bound, which is determined by the fronthaul capacity. It is worth noting that for a larger fronthaul capacity, this upper bound will be higher, which, however, will be bounded by the ultimate max-min data rate under unlimited fronthaul capacity. 

The average number of associated RRHs per DR versus the fronthaul capacity $C_L$ is presented in Fig. \ref{fig.rrh}. Recall that the total number of RRHs $L = 3$. It can be seen from Fig. \ref{fig.rrh} that due to the limitation of wireless fronthaul capacity links, each DR can be only served by a small group of RRHs.  Given the RF energy target $Q_{\min}$, the number of RRHs associated with each DR will increase as the fronthaul capacity grows. 
Moreover, for the same $C_L$, the group of associated RRHs will expand as $Q_{\min}$ becomes larger. In fact, the data rate of each DR will become smaller for a larger $Q_{\min}$. Thus, each RRH can serve more DRs with lower data rate. 


\begin{figure}
\centering
\includegraphics[height = 7cm]{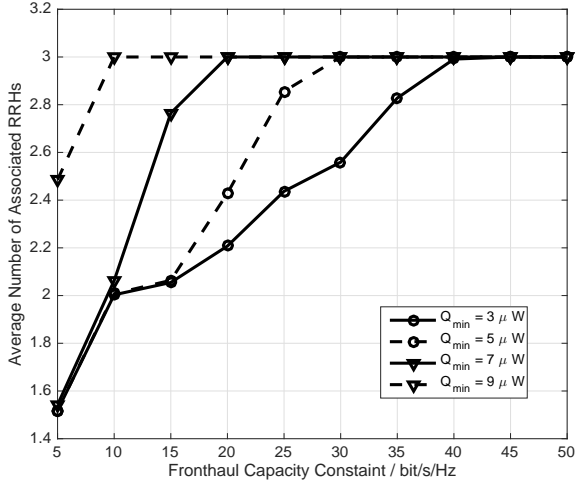}
\caption{Given different RF energy target $Q_{\min}$, average number of associated RRHs per DR versus fronthaul capacity $C_L$ for generated green energy $E_L = 5$W.}
\label{fig.rrh}
\vspace{-1em}
\end{figure}

To verify the performance of the proposed joint beamforming algorithm, a separated beamforming algorithm is introduced. Specifically, an energy beamforming vector will be firstly designed to satisfy each ER with sufficient RF energy. Then, data beamforming vectors will be optimized to maximize the minimum data rate among all the DRs.
Comparison of the max-min data rate for different beamforming strategies is shown in Fig.~\ref{fig.region_compare}. It can be observed that the proposed joint beamforming design outperforms the separate beamforming design for any RF energy target. Moreover, when a larger fronthaul capacity is given, the performance gap will become higher between these two strategies, from which we know that the proposed joint beamforming algorithm is superior to the separate beamforming algorithm.

\section{Conclusion}\label{sec.conclusion}

This paper studied joint transmit beamforming design to achieve max-min fair SWIPT in a green Cloud-RAN with mmWave wireless fronthaul.
In order to achieve a balanced user experience for separately located mobile users in the network, the minimum data rate among all the DRs has been maximized, while satisfying each ER with sufficient RF energy at the same time.
The formulated optimization problem is originally non-convex, which is challenging to solve, especially for the fronthaul capacity constraint in an $l_0$-norm form.
Thus, we have proposed a two-step iterative algorithm, which firstly approximates the $l_0$-norm constraint by the reweighted $l_1$-norm, and then derives the optimal max-min data rate and the corresponding joint beamforming vector using SDR and bi-section search. 
Numerical simulations demonstrates the superiority of the proposed joint beamforming algorithm to the separate beamforming algorithm. 
In our future work, joint beamforming will be designed to support SWIPT in a large-scale Cloud-RAN with massive MIMO and imperfect CSI.

\begin{figure}
\centering
\includegraphics[height = 7cm]{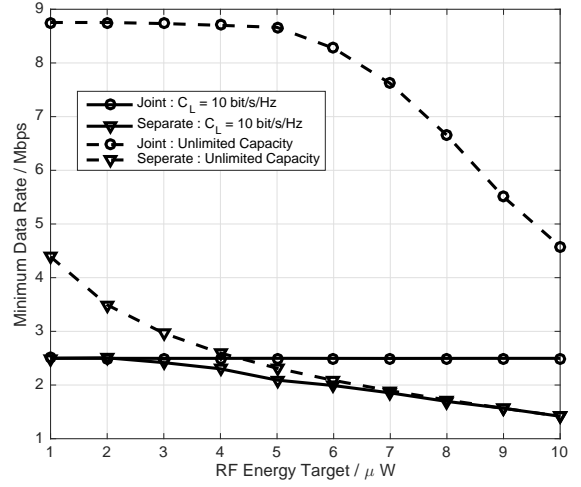}
\caption{Comparison of max-min data rate for different beamforming strategies for generated green energy $E_L = 10$W.}
\label{fig.region_compare}
\vspace{-1em}
\end{figure}

\appendix

\subsection{Proof of Lemma \ref{lm.rank_one}}\label{app.rank_one}
\begin{proof}
Introducing dual variables $\{\lambda_l \geq 0\}_{l\in\mathcal{L}}$, $\{\mu_k \geq 0\}_{k\in\mathcal{K}}$, $\{\nu_j \geq 0\}_{j\in\mathcal{J}}$, $\{\xi_l \geq 0\}_{l\in\mathcal{L}}$, $\{\mathbf{X}_k \succeq 0\}_{k\in\mathcal{K}}$ and $\{\mathbf{Y}_j \succeq 0 \}_{j\in\mathcal{J}}$, the Lagrangian function of (P6) can be written by
\begin{align}
& \mathcal{L}(\{\mathbf{W}_k\},\{\mathbf{V}_j\},\{\lambda_l\},\{\mu_k\},\{\nu_j\},\{\xi_l\},\{\mathbf{X}_k\},\{\mathbf{Y}_j\}) \nonumber \\
= & \rho + \sum_{l=1}^L \lambda_l \left( \sum_{k=1}^K \trace(\mathbf{W}_{k}\mathbf{A}_l) + \sum_{j=1}^J \trace(\mathbf{V}_{j}\mathbf{A}_l) - \rho E_l \right)  \nonumber \\
- &\!\sum_{k=1}^K\!\mu_k\!\left(\!\frac{1}{\gamma}\trace(\mathbf{H_k}\mathbf{W}_k)\!-\!\!\!\sum_{i=1,i\neq k}^K\!\!\!\!\trace(\mathbf{H_k}\!\mathbf{W}_i)\!-\!\sum_{i=1}^J\!\trace(\mathbf{H}_k \mathbf{V}_i)\!-\!\sigma^2\!\right) \nonumber \\
- & \sum_{j=1}^J \nu_j \left( \sum_{i=1}^K \trace(\mathbf{G}_j \mathbf{W}_i) + \sum_{i=1}^J \trace(\mathbf{G}_j \mathbf{V}_i) - Q_{\min} / \eta \right) \nonumber \\
+ & \sum_{l=1}^L \xi_l \left( \sum_{k=1}^K \beta_{kl}\trace(\mathbf{W}_{k}\mathbf{A}_l)\hat{R}_k - C_l \right) \nonumber \\
 - & \sum_{k=1}^K \trace(\mathbf{W}_k\mathbf{X}_k) - \sum_{j=1}^J \trace(\mathbf{V}_j \mathbf{Y}_j), \\
= & \sum_{k=1}^K \trace(\mathbf{B}_k \mathbf{W}_k) - \sum_{k=1}^K \trace\left(\mathbf{W}_k\left( \frac{\mu_k}{\gamma}\mathbf{H}_k + \mathbf{X}_k\right)\right) \nonumber \\
 & + \sum_{j=1}^J \trace(\mathbf{D}_j\mathbf{V}_j) - \sum_{j=1}^J \trace\left(\mathbf{V}_j \left( \nu_j \mathbf{G}_j + \mathbf{Y}_j \right)\right) + \Delta,
\end{align}
where we denote 
\begin{align}
& \Delta = \rho - \sum_{l=1}^L \lambda_l \rho E_l + \sum_{k=1}^K \mu_k \sigma^2 + \sum_{j=1}^J \nu_j Q_{\min} / \eta - \sum_{l=1}^L \xi_l C_l, \nonumber \\
& \mathbf{B}_k = \sum_{l=1}^L \left( \lambda_l + \xi_l\beta_{kl}\hat{R}_k \right) \mathbf{A}_l + \sum_{i=1,i\neq k}^K\mu_i\mathbf{H}_i - \sum_{j=1}^J \nu_j\mathbf{G}_j, \nonumber \\
& \mathbf{D}_j = \sum_{l=1}^L \lambda_l \mathbf{A}_l + \sum_{k=1}^K \mu_k\mathbf{H}_k - \sum_{i=1,i\neq j}^J \nu_i \mathbf{G}_i. \nonumber
\end{align}

Since (P6) is a convex optimization problem, the Slater's condition can be satisfied and then strong duality holds. 
Thus, by denoting $\mathbf{\Theta} = (\{\lambda_l\},\{\mu_k\},\{\nu_j\},\{\xi_l\})$, the dual problem can be written by
\begin{align}
\max_{\mathbf{\Theta},\{\mathbf{X}_k\},\{\mathbf{Y}_j\}} \min_{\{\mathbf{W}_k\},\{\mathbf{V}_j\}} \mathcal{L}(\{\mathbf{W}_k\},\{\mathbf{V}_j\},\mathbf{\Theta},\{\mathbf{X}_k\},\{\mathbf{Y}_j\}). \nonumber
\end{align}

Suppose that the optimal solution of the dual problem is $\mathbf{\Theta}^*$, $\mathbf{X}^*_k$ and $\mathbf{Y}^*_j$. Then, we have the following KKT conditions:
\begin{align}
\mathbf{W}^*_k \mathbf{X}^*_k & = 0, & \forall k \in \mathcal{K},\label{eq.wx_kkt} \\
\mathbf{V}^*_j \mathbf{Y}^*_j & = 0, & \forall j \in \mathcal{J}, \\
\mathbf{B}^*_k - \left(\frac{\mu_k^*}{\gamma}\mathbf{H}_k + \mathbf{X}^*_k\right) & = 0, & \forall k \in \mathcal{K}, \label{eq.bk_kkt} \\ 
\mathbf{D}^*_j - \left(\nu^*_j\mathbf{G}_j + \mathbf{Y}^*_j\right)  & = 0, & \forall j \in \mathcal{J}, 
\end{align} 
where $\mathbf{B}^*_k$ and $\mathbf{D}^*_j$ can be obtained by substituting the optimal dual variables into their expressions, respectively. 
Now, to prove $\rank(\mathbf{W}^*_k) = 1$, $\forall k \in \mathcal{K}$ with probability one, we will firstly show each $\mathbf{B}^*_k$ is positive definite by contradiction. 
Suppose that $\mathbf{B}^*_{k_0}, k_0 \in \mathcal{K}$ is a non-positive definite matrix. Thus, the beamforming matrix can be chosen as $\mathbf{W}_{k_0} = \kappa\mathbf{w}_{k_0}\mathbf{w}_{{k_0}}^H$, where $\kappa > 0$ is a scaling factor and $\mathbf{w}_{k_0}$ is the eigenvector corresponding to one of the non-positive eigenvalues of $\mathbf{B}^*_{k_0}$. As a result, the optimal value of (P6) can be obtained by
\begin{align}
& \min_{\{\mathbf{W}_k\}} \mathcal{L}(\{\mathbf{W}_k\},\{\mathbf{V}^*_j\},\mathbf{\Theta^*},\{\mathbf{X}^*_k\},\{\mathbf{Y}^*_j\}) \\
& = \Delta^* + \kappa \mathbf{w}_{k_0}^H\mathbf{B}^*_{k_0} \mathbf{w}_{k_0} - \kappa \mathbf{w}_{k_0}^H\left( \frac{\mu^*_{k_0}}{\gamma}\mathbf{H}_{k_0} + \mathbf{X}^*_{k_0}\right)\mathbf{w}_{k_0} \nonumber \\
& + \sum_{k=1,k\neq{k_0}}^K \trace(\mathbf{B}^*_k \mathbf{W}_k) - \sum_{k=1,k\neq{k_0}}^K \trace\left(\mathbf{W}_k\left( \frac{\mu^*_k}{\gamma}\mathbf{H}_k + \mathbf{X}^*_k\right)\right) \nonumber \\
& + \sum_{j=1}^J \trace(\mathbf{D}^*_j\mathbf{V}^*_j) - \sum_{j=1}^J \trace\left(\mathbf{V}^*_j \left( \nu_j \mathbf{G}_j + \mathbf{Y}^*_j \right)\right),
\end{align}
where $\kappa\mathbf{w}_{k_0}^H\mathbf{B}^*_{k_0}\mathbf{w}_{k_0}$ and 
$-\kappa\mathbf{w}_{k_0}^H\left( \frac{\mu^*_{k_0}}{\gamma}\mathbf{H}_{k_0}+\mathbf{X}^*_{k_0}\right)\mathbf{w}_{k_0}$ are both non-positive, which leads to an unbounded optimal value when $\kappa \rightarrow \infty$. However, it contradicts with the fact that the optimal value of (P6) is non-negative, thus strong duality does not hold. Therefore, each $\mathbf{B}^*_{k}$ is positive definite with probability one and $\rank(\mathbf{B}^*_{k}) = WL$, since the channel vectors $\mathbf{h}_{k}$ and $\mathbf{g}_{j}$ are independently distributed. Then, according to \eqref{eq.bk_kkt}, we have
\begin{align}
\rank(\mathbf{B}^*_{k}) \leq \rank\left(\frac{\mu_k^*}{\gamma}\mathbf{H}_k\right) + \rank(\mathbf{X}^*_k),
\end{align}
which indicates that 
\begin{align}
\rank(\mathbf{X}^*_k) \geq \rank(\mathbf{B}^*_{k}) - \rank\left(\frac{\mu_k^*}{\gamma}\mathbf{H}_k\right) \geq ML - 1.
\end{align}
As a result, with the KKT condition in \eqref{eq.wx_kkt}, we know that 
\begin{align}
\rank(\mathbf{W}^*_{k}) \leq ML - \rank(\mathbf{X}^*_k) = 1.
\end{align}
Following similar steps, it can be proved that $\rank(\mathbf{V}^*_{k}) \leq 1$ holds with probability one. This completes the proof of Lemma \ref{lm.rank_one}.
\end{proof}






\begin{thebibliography}{10}
\providecommand{\url}[1]{#1}
\csname url@samestyle\endcsname
\providecommand{\newblock}{\relax}
\providecommand{\bibinfo}[2]{#2}
\providecommand{\BIBentrySTDinterwordspacing}{\spaceskip=0pt\relax}
\providecommand{\BIBentryALTinterwordstretchfactor}{4}
\providecommand{\BIBentryALTinterwordspacing}{\spaceskip=\fontdimen2\font plus
\BIBentryALTinterwordstretchfactor\fontdimen3\font minus
  \fontdimen4\font\relax}
\providecommand{\BIBforeignlanguage}[2]{{%
\expandafter\ifx\csname l@#1\endcsname\relax
\typeout{** WARNING: IEEEtran.bst: No hyphenation pattern has been}%
\typeout{** loaded for the language `#1'. Using the pattern for}%
\typeout{** the default language instead.}%
\else
\language=\csname l@#1\endcsname
\fi
#2}}
\providecommand{\BIBdecl}{\relax}
\BIBdecl

\bibitem{chih2014toward}
I.~Chih-Lin, C.~Rowell, S.~Han, Z.~Xu, G.~Li, and Z.~Pan, ``Toward green and
  soft: a {5G} perspective,'' \emph{IEEE Commun. Mag.}, vol.~52, no.~2, pp.
  66--73, 2014.

\bibitem{chen2017energy}
Z.~Chen, Z.~Chen, L.~X. Cai, and Y.~Cheng, ``Energy-throughput tradeoff in
  sustainable {Cloud-RAN} with energy harvesting,'' in \emph{IEEE Int. Conf.
  Commun. (ICC)}, 2017, pp. 1--6.

\bibitem{dehos2014millimeter}
C.~Dehos, J.~L. Gonz{\'a}lez, A.~De~Domenico, D.~Ktenas, and L.~Dussopt,
  ``Millimeter-wave access and backhauling: the solution to the exponential
  data traffic increase in 5g mobile communications systems?'' \emph{IEEE
  Commun. Mag.}, vol.~52, no.~9, pp. 88--95, 2014.

\bibitem{stephen2017joint}
R.~G. Stephen and R.~Zhang, ``Joint millimeter-wave fronthaul and {OFDMA}
  resource allocation in ultra-dense {CRAN},'' \emph{IEEE Trans. Commun.},
  vol.~65, no.~3, pp. 1411--1423, 2017.

\bibitem{hu2017joint}
B.~Hu, C.~Hua, J.~Zhang, C.~Chen, and X.~Guan, ``Joint fronthaul multicast
  beamforming and user-centric clustering in downlink {C-RANs},'' \emph{IEEE
  Trans. Wireless Commun.}, vol.~16, no.~8, pp. 5395--5409, 2017.

\bibitem{zhou2015greendelivery}
S.~Zhou, J.~Gong, Z.~Zhou, W.~Chen, and Z.~Niu, ``{GreenDelivery}: Proactive
  content caching and push with energy-harvesting-based small cells,''
  \emph{IEEE Commun. Mag.}, vol.~53, no.~4, pp. 142--149, 2015.

\bibitem{qin2017fronthaul}
C.~Qin, W.~Ni, H.~Tian, and R.~Liu, ``Fronthaul load balancing in energy
  harvesting powered cloud radio access networks,'' \emph{IEEE Access}, 2017.

\bibitem{chen2017sustainable}
Z.~Chen, L.~X. Cai, Y.~Cheng, and H.~Shan, ``Sustainable cooperative
  communication in wireless powered networks with energy harvesting relay,''
  \emph{IEEE Trans. Wireless Commun.}, vol.~16, no.~12, pp. 8175--8189, 2017.

\bibitem{shi2014group}
Y.~Shi, J.~Zhang, and K.~B. Letaief, ``Group sparse beamforming for green
  {Cloud-RAN},'' \emph{IEEE Trans. Wireless Commun.}, vol.~13, no.~5, pp.
  2809--2823, 2014.

\bibitem{dai2014sparse}
B.~Dai and W.~Yu, ``Sparse beamforming and user-centric clustering for downlink
  cloud radio access network,'' \emph{IEEE Access}, vol.~2, pp. 1326--1339,
  2014.

\bibitem{luo2015downlink}
S.~Luo, R.~Zhang, and T.~J. Lim, ``Downlink and uplink energy minimization
  through user association and beamforming in {C-RAN},'' \emph{IEEE Trans.
  Wireless Commun.}, vol.~14, no.~1, pp. 494--508, 2015.

\bibitem{xiang2013coordinated}
Z.~Xiang, M.~Tao, and X.~Wang, ``Coordinated multicast beamforming in multicell
  networks,'' \emph{IEEE Trans. Wireless Commun.}, vol.~12, no.~1, pp. 12--21,
  2013.

\bibitem{chen2018optimal}
Z.~Chen, Z.~Chen, L.~X. Cai, and Y.~Cheng, ``Optimal beamforming design for
  simultaneous wireless information and power transfer in sustainable
  {Cloud-RAN},'' \emph{IEEE Trans. Green Commun. and Netw.}, vol.~2, no.~1, pp.
  163--174, 2018.

\bibitem{Boshkovska2017max}
E.~Boshkovska, X.~Chen, L.~Dai, D.~W.~K. Ng, and R.~Schober, ``Max-min fair
  beamforming for {SWIPT} systems with non-linear {EH} model,'' in \emph{IEEE
  {VTC-Fall}}, 2017, pp. 1--6.

\bibitem{ariffin2017sparse}
W.~N. S. F.~W. Ariffin, X.~Zhang, and M.~R. Nakhai, ``Sparse beamforming for
  real-time resource management and energy trading in green c-ran,'' \emph{IEEE
  Trans. Smart Grid}, vol.~8, no.~4, pp. 2022--2031, 2017.

\bibitem{ng2015secure}
D.~W.~K. Ng and R.~Schober, ``Secure and green {SWIPT} in distributed antenna
  networks with limited backhaul capacity,'' \emph{IEEE Trans. Wireless
  Commun.}, vol.~14, no.~9, pp. 5082--5097, 2015.

\bibitem{boyd2004convex}
S.~Boyd and L.~Vandenberghe, \emph{Convex optimization}.\hskip 1em plus 0.5em
  minus 0.4em\relax Cambridge university press, 2004.

\bibitem{xu2014multiuser}
J.~Xu, L.~Liu, and R.~Zhang, ``Multiuser {MISO} beamforming for simultaneous
  wireless information and power transfer,'' \emph{IEEE Trans. Signal
  Process.}, vol.~62, no.~18, pp. 4798--4810, 2014.

\bibitem{candes2008enhancing}
E.~J. Candes, M.~B. Wakin, and S.~P. Boyd, ``Enhancing sparsity by reweighted
  $l_1$ minimization,'' \emph{J. Fourier Anal. Appl.}, vol.~14, no.~5, pp.
  877--905, 2008.

\end{thebibliography}
\end{document}